\documentclass[10pt]{llncs}
\sloppy
\emergencystretch=2em


\def\nottoobig#1{{\hbox{$\left#1\vcenter to1.111\ht\strutbox{}\right.\n@space$}}}
\newtheorem{fact}{Fact}
\newtheorem{cclaim}{Claim}
\if01
\newtheorem{fact}{Fact}

\newtheorem{theorem}{Theorem}[section]

\newtheorem{lemma}[theorem]{Lemma}

\newtheorem{definition}[theorem]{Definition}

\fi
\usepackage{epsfig}
\usepackage{amsmath}
\usepackage{amsfonts}

\newcommand{\nat}{{\mathbb N}}

\newcommand{\poly}{{\rm poly}}

\def\nottoobig#1{{\hbox{$\left#1\vcenter
to1.111\ht\strutbox{}\right.\n@space$}}}


\newcommand{\prob}{{\rm Prob}}





\newcommand{\ie}{$\mbox{i.e.}$}




\newlength{\filength}
\settowidth{\filength}{\mbox{\bf f{}i}}
\newsavebox{\gcbox}
\sbox{\gcbox}{\framebox[\filength]{\rule{0ex}{2ex}}}




\newcommand{\qedblob}{\mbox{\rule[-1.5pt]{5pt}{10.5pt}}}
\def\literalqed{{\ \nolinebreak\hfill\mbox{\qedblob\quad}}}

\def\qed{\literalqed}









\newcommand{\singlespacing}{\let\CS=
\@currsize\renewcommand{\baselinestretch}{1}\tiny\CS}
\newcommand{\singlespacingplus}{\let\CS=
\@currsize\renewcommand{\baselinestretch}{1.25}\tiny\CS}
\newcommand{\doublespacing}{\let\CS=
\@currsize\renewcommand{\baselinestretch}{1.75}\tiny\CS}
\newcommand{\draftspacing}{\let\CS=
\@currsize\renewcommand{\baselinestretch}{2.0}\tiny\CS}


\def\zo{\{0,1\}}


\def\mapping{\rightarrow}










\newcommand{\zon}{\zo^n}

\newcommand{\F}{{\mathbb{F}}}


\if01
\setlength{\oddsidemargin}{-0.3in}
\setlength{\evensidemargin}{\oddsidemargin}
\setlength{\textwidth}{6.5in}
\setlength{\textheight}{9.0in}
\setlength{\topmargin}{-1.0in}

\setlength{\topmargin}{-.75in}
\setlength{\textheight}{9.50in}
\setlength{\oddsidemargin}{0in}
\setlength{\evensidemargin}{0in}
\setlength{\textwidth}{6.00in}
\fi

\if
\setlength{\topmargin}{0.5in}
\setlength{\textwidth}{12.0cm}
\setlength{\textheight}{19.3cm}
\setlength{\baselineskip}{42pt} 
\fi


\makeatletter
\def\@listI{\leftmargin\leftmargini \parsep 4.5pt plus 1pt minus 1pt\topsep6pt plus 2pt minus 2pt \itemsep  2pt plus 2pt minus 1pt}

\let\@listi\@listI
\@listi
\makeatother

\author{ {Marius Zimand\/}
\thanks{  Department of Computer and Information Sciences, Towson University,
Baltimore, MD.; email: mzimand@towson.edu; http://triton.towson.edu/\~{ }mzimand.
The author is supported in part
by NSF grant CCF 0634830.}}

\author{
{Marius Zimand}\inst{1}
\thanks{ {  The author is supported in part by NSF grant CCF 0634830. \tt  http://triton.towson.edu/\~{ }mzimand}.}}
\institute{
{Department of Computer and Information Sciences, Towson University,
Baltimore, MD, USA}
}

\date{ }


\title{On generating independent random  strings
}

\begin{document}

\maketitle

\begin{abstract}
It is shown that from two strings that are partially random and independent (in the sense of Kolmogorov complexity) 
it is possible to effectively construct polynomially many strings that are random and pairwise independent. If the two
initial strings are random, then the above task can be performed in polynomial time.  It is also possible to construct in polynomial time a random string, from two strings that have constant randomness rate.
\end{abstract}

{\bf Keywords:} Kolmogorov complexity, random strings, independent strings, randomness extraction.
\smallskip

\section{Introduction}
This paper belongs to a line of research that investigates whether certain attributes of randomness can be improved effectively. We focus on
finite binary strings and we regard randomness from the point of view of Kolmogorov complexity. Thus, the amount of randomness in a binary string $x$
is given by $K(x)$, the Kolmogorov complexity of $x$ and the randomness rate of $x$ is defined as $K(x)/|x|$, where $|x|$ is the length of $x$. Roughly speaking, a string $x$ is considered to be random if its randomness rate is approximately equal to $1$.
It is obvious that randomness cannot be created from nothing (e.g., from the empty string). On the other hand, it might be possible that if we
already
possess  some randomness, we can produce ``better" randomness or ``new" randomness. For the case when we start with \emph{one} string $x$, it is known that there exists no computable function that produces another string $y$ with higher randomness rate (\ie, ``better" randomness), and it is also clear that there is no computable function that produces ``new" randomness, by which we mean a string $y$ that has non-constant Kolmogorov complexity conditioned by $x$. In fact, Vereshchagin and Vyugin~\cite[Th. 4]{ver-vyu:j:kolm} construct a string $x$ with high Kolmogorov complexity so that any shorter string that has small Kolmogorov complexity conditioned by $x$ (in particular any string effectively constructed from $x$) has small Kolmogorov complexity unconditionally. 
Therefore, we need to analyze what is achievable if we start with two or more strings that have a certain amount of randomness and a certain degree of independence. In this case, in certain circumstances, positive solutions exist. For example, Fortnow, Hitchcock, Pavan, Vinodchandran and Wang~\cite{fhpvw:c:extractKol}  show that, for any $\sigma$ there exists a constant $\ell$ and a polynomial-time procedure
that from an input consisting of $\ell$ $n$-bit strings $x_1, \ldots, x_\ell$, each with Kolmogorov complexity at least $\sigma n$, constructs
an $n$-bit string with Kolmogorov complexity $\succeq n - {\rm dep}(x_1, \ldots, x_\ell)$ (${\rm dep}(x_1, \ldots, x_\ell)$ measures the dependency of the input strings and is defined as $\sum_{i=1}^{\ell} K(x_i) - K(x_1 \ldots x_\ell)$; $\succeq$ means that the inequality holds within an error of $O(\log n)$).

In this paper we focus on the case when the input consists of \emph{two} strings $x$ and $y$ of length $n$. We say that
$x$ and $y$ have dependency at most $\alpha(n)$ if the complexity of each string does not decrease by more than $\alpha(n)$ when it is conditioned by the other string, \ie, if $K(x) - K(x \mid y) \leq \alpha(n)$ and  $K(y) - K(y \mid x) \leq \alpha(n)$. The reader should have in mind the situation $\alpha(n) = O(\log n)$, in which case we say
 that $x$ and $y$ are independent (see~\cite{cal-zim:c:dlt08} for a discussion of independence for finite binary strings and infinite binary sequences). We address the following two questions:
\smallskip 
 
\emph{Question 1.} Given $x$ and $y$ with a certain amount of randomness and a certain degree of independence, is it possible to effectively/efficiently construct a string $z$ that is random?
\smallskip

	\emph{Question 2.} (a more ambitious version of Question 1)  Given $x$ and $y$ with a certain amount of randomness and a certain degree of independence, is it possible to effectively/efficiently construct strings that are random and have small dependency with $x$, with $y$, and pairwise among themselves? How many such strings exhibiting ``new" randomness can be produced?
\smallskip

A construction is \emph{effective} if it can be done by a computable function, and it is \emph{efficient} if it can be done by a polynomial-time computable function.

We first recall the well-known (and easy-to-prove) fact that if $x$ and $y$ are random and independent, then the string $z$ obtained by bit-wise XOR-ing the bits of $x$ and $y$ is random and independent with $x$ and with $y$. Our first result is an extension of the above fact.
\smallskip
 
\emph{Theorem  1.} (Informal statement.)
If $x$ and $y$ are random and have dependency at most $\alpha(n)$, then by doing simple arithmetic operations in the field GF$[2^n]$ (which take polynomial time), it is possible to produce polynomially many strings $z_1, \ldots, z_{{\rm poly}(n)}$ of length $n$ such that $K(z_i) \succeq n - \alpha(n)$ and the strings
$x$, $y$, $z_1, \ldots, z_{{\rm poly}(n)}$ are pairwise at most $\approx \alpha(n)$-dependent, 
where $\approx$ ($\succeq$)  means that the equality (resp., the inequality) is within an error of $O(\log n)$.
In particular, if $x$ and $y$ are independent, then
the output strings are random and together with the input strings form a collection of pairwise independent strings.
\smallskip

The problem is more complicated when the two input strings $x$ and $y$ have randomness rate significantly smaller than $1$. In this case, our questions are related to randomness extractors, which have been studied extensively in computational complexity.  A randomness extractor is a polynomial-time computable procedure that improves the quality of a defective source of randomness. A source of randomness is modeled by a distribution $X$ on $\zon$, for some $n$, and its defectiveness is modeled by the min-entropy of $X$ ($X$ has min-entropy $k$ if $2^{-k}$ is the largest probability that $X$ assigns to any string in $\zon$). There are several type of extractors; for us, multi-source extractors are of particular interest.   An $\ell$-multisource extractor takes as input $\ell$ defective independent distributions on the set of $n$-bit strings and outputs a string whose induced distribution is statistically close to the uniform distribution. The analogy between randomness extractors and our questions
is quite direct: The number of sources of the extractor corresponds to the number of input strings and the min-entropy of the sources corresponds to the Kolmogorov complexity of the input strings. For $\ell = 2$, the best  multisource extractors  are (a) the extractor given by Raz~\cite{raz:c:multiextract} with one source having min-entropy $((1/2) + \alpha)n$ (for some small $\alpha$) and the second source having min-entropy polylog($n$), and (b) the extractor given by Bourgain~\cite{bou:j:multiextract} with both sources having min-entropy $((1/2) - \alpha)n$ (for some small $\alpha$). Both these extractors are based on recent results in arithmetic combinatorics. It appears that finding polynomial-time constructions achieving the goals in Question 2 is difficult. If we settle for effective constructions, then positive solutions exist.
In~\cite{zim:c:kolmlimindep}, we have shown that there exists a computable function $f$ such that if $x$ and $y$ have Kolmogorov complexity $s(n)$ and dependency at most $\alpha(n)$, then $f(x,y)$ outputs a string $z$ of length
$m \approx s(n)/2$ such that $K(z \mid x) \succeq m - \alpha(n)$ and $K(z \mid y) \succeq m - \alpha(n)$.
Our second result extends the methods from~\cite{zim:c:kolmlimindep} and shows that it is possible to effectively construct polynomially many strings exhibiting ``new" randomness.
\smallskip

\emph{Theorem 2.} (Informal statement.) For every function $O(\log n) \leq s(n) \leq n$, there exists a computable function $f$ such that if $x$ and $y$ have Kolmogorov complexity $s(n)$ and dependency at most $\alpha(n)$, then $f(x,y)$ outputs polynomially many strings $z_1, \ldots, z_{{\rm poly}(n)}$ of
length $m \approx s(n)/3$ such that $K(z_i) \succeq m - \alpha(n)$ and the strings $(x,y, z_1, \ldots, z_{{\rm poly}(n)})$ are pairwise at most $\approx \alpha(n)$-dependent. In particular, if $x$ and $y$ are independent, then
the output strings are random and together with the input strings form a collection of pairwise independent strings.
\smallskip

For Question 1, we give a polynomial-time construction in case $x$ and $y$ have linear Kolmogorov complexity, \ie,
$K(x) \geq \delta n$ and $K(y) \geq \delta n$, for a positive constant $\delta > 0$. The proof relies heavily on 
a recent result of Rao~\cite{rao:c:randpm}, which shows the existence of $2$-source condensers. (A $2$-source condenser is similar but weaker than a $2$-source extractor in that the condenser's output is only required to be statistically close to a distribution that has larger min-entropy rate than that of its inputs, while the
extractor's output is required to be statistically close to the uniform distribution.)
\smallskip

\emph{Theorem 3.} (Informal statement.) For every constant $\delta > 0$, there exists a polynomial-time computable function $f$ such that
if $x$ and $y$ have Kolmogorov complexity $\delta n$ and dependency at most $\alpha(n)$, then $f(x,y)$ outputs a string $z$ of length $m = \Omega(\delta n)$ and $K(z) \geq m - (\alpha(n) + \poly (\log n))$.
\smallskip

The main proof technique is an extension of the method used in~\cite{zim:c:csr} and in~\cite{zim:c:kolmlimindep}. It uses ideas from 
Fortnow et al.~\cite{fhpvw:c:extractKol}, who showed that a multi-source extractor can also be used to extract Kolmogorov complexity.
A key element is the use of \emph{balanced tables}, which are combinatorial objects similar to $2$-source extractors. A balanced table is
an $N$-by-$N$ table whose cells are colored with $M$ colors in such a way that each sufficiently large rectangle inside the table is
colored in a balanced way, in the sense that all colors appear approximately the same number of times. The exact requirements for the balancing
property are tailored according to their application. The type of balanced table required in Theorem 2 is shown to exist using the probabilistic
method and then constructed using exhaustive search. This is why the transformation in Theorem 2 is only effective, and not polynomial-time
computable. The existence of the type of balanced table used in Theorem 3 is a direct consequence of  Rao's $2$-source condenser.

The paper is structured as follows. Sections~\ref{s:prelim} and~\ref{s:indep} introduce the notation and the main concepts of Kolmogorov complexity. 
Section~\ref{s:balancedtable} is dedicated to balanced tables. Theorem 1 and Theorem 2 are proved in Section~\ref{s:multiplerandomindep}, and Theorem 3 is
proved in Section~\ref{s:polyone}.

\subsection{Preliminaries}
\label{s:prelim}
$\nat$ denotes the set of natural numbers. For $n \in \nat$, $[n]$ denotes the set $\{1,2, \ldots, n\}$.
We work over the binary alphabet $\zo$. A string is an element of $\{0,1\}^*$.  If $x$ is a string, $|x|$ denotes its length.  The cardinality of a finite set $A$ is denoted $|A|$.  Let $M$ be a standard Turing machine. For any string $x$, define the \emph{Kolmogorov complexity} of $x$ with respect to $M$, as 
$K_M(x) = \min \{ |p| \mid M(p) = x \}$.
 There is a universal Turing machine $U$ such that for every machine $M$ there is a constant $c$ such that for all $x$,
$K_U(x) \leq K_M(x) + c$.
We fix such a universal machine $U$ and dropping the subscript, we let $K(x)$ denote the Kolmogorov complexity of $x$ with respect to $U$. For the concept of conditional Komogorov complexity, the underlying machine is a Turing machine that in addition to the read/work tape which in the initial state contains the input $p$, has a second tape containing initially a string $y$, which is called the conditioning information. Given such a machine $M$, we define the Kolmogorov complexity of $x$ conditioned by $y$ with respect to $M$ as 
$K_M(x \mid y) = \min \{ |p| \mid M(p, y) = x \}$.
Similarly to the above, there exist  universal machines of this type and a constant $c$ and they satisfy the relation similar to the one above, but for conditional complexity. We fix such a universal machine $U$, and dropping the subscript $U$, we let $K(x \mid y)$ denote the Kolmogorov complexity of $x$ conditioned by $y$ with respect to $U$. 
In this paper, the constants implied in the $O(\cdot)$ notation depend only on  the universal machine.

The Symmetry of Information Theorem (see \cite{zvo-lev:j:kol}) states that for all strings $x$ and $y$:
\begin{equation}
\label{e:symmetryinf}
| (K(x) - K(x\mid y)) - (K(y) - K(y \mid x)) | \leq O( \log K(x) + \log K(y)).
\end{equation}
In case the strings $x$ and $y$ have length $n$, it can be shown that
\begin{equation}
\label{e:symmetryinf2}
| (K(x) - K(x\mid y)) - (K(y) - K(y \mid x)) | \leq 2 \log n + O(1).
\end{equation}

Sometimes we need to concatenate two strings $a$ and $b$ in a self-delimiting matter, \ie, in a way that allows to retrieve
each one of them. A simple way to do this is by taking $a_1 a_1 a_2 a_2 \ldots a_n a_n 01 b$, where $a = a_1 \ldots a_n$, with each $a_i \in \zo$. A more efficient encoding is as follows. Let $|a|$ in binary notation be $c_1 c_2 \ldots c_k$. Note that $k = \lfloor \log |a| \rfloor + 1$. Then
we define ${\rm concat}(a,b) = c_1 c_1 c_2 c_2 \ldots c_k c_k 01 a b$. Note that $|{\rm concat}(a,b)| = |a| + |b| + 2 \lfloor \log |a| \rfloor  + 4$.

\subsection{Independent strings}
\label{s:indep}
\begin{definition}
(a) Two strings $x$ and $y$ are at most $\alpha(n)$-dependent if
$K(x) - K(x|y) \leq \alpha(|x|)$ and
$K(y) - K(y|x) \leq \alpha(|y|)$.

(b) The strings $(x_1, x_2, \ldots )$ are pairwise at most $\alpha(n)$-dependent, if for every $i \not= j$, $x_i$ and $x_j$ are at most
$\alpha(n)$-dependent. 
\end{definition}

\subsection{Balanced tables}
\label{s:balancedtable}

A table is a function $T : [N] \times [N] \mapping [M]$. In our applications, $N$ and $M$ are powers of $2$, \ie, $N = 2^n$ and $M = 2^m$. We identify $[N]$ with $\zo^n$ and $[M]$ with $\zo^m$. Henceforth, we assume this setting. 

It is convenient  to view such a function as a two dimensional table with $N$ rows and $N$ columns where each entry has a color from the set $[M]$. If $B_1, B_2$ are subsets of $[N]$, the $B_1 \times B_2$ rectangle of table $T$ is the part of $T$ comprised of the rows in $B_1$ and the columns in $B_2$. If $A \subseteq \zo^m$ and $(x,y) \in [N] \times [N]$, we say that the cell $(x,y)$ is $A$-colored if $T(x,y) \in A$.

In our proofs, we need the various tables to be \emph{balanced}, which, roughly speaking, requires that in each sufficiently
large rectangle $B_1 \times B_2$, all colors appear approximately the same number of times.

One variant of this concept is given in the following definition.
\begin{definition}
\label{d:stronglybalanced}
Let $k \in \nat$. The table $T$ is
$(S,n^k)$-strongly balanced if for every pair of sets $B_1$ and $B_2$, where  $B_1 \subseteq [N]$, $|B_1| \geq S$, $B_2 \subseteq [N]$, $|B_2| \geq S$, the following two inequalities hold:
\begin{itemize}
	\item[(1)] For every $a \in [M]$,
	\[
	|\{(x,y) \in B_1 \times B_2 \mid T(x,y) = a\}| \leq \frac{2}{M}|B_1 \times B_2|,
	\]
	\item[(2)] for every $(a,b) \in [M]^2$ and for every $(i,j) \in [n^k]^2$,
	\[
	|\{(x,y) \in B_1 \times B_2 \mid T(x+i,y) = a \mbox{ and } T(x+j,y) = b \}| \leq \frac{2}{M^2}|B_1 \times B_2|,
	\]
	where addition is done modulo $N$.
\end{itemize}
\end{definition}
Using the probabilistic method, we show that, under some settings for the parameters, strongly-balanced tables exist.
\begin{lemma}
\label{l:balancedtable}
If $S^2 > 3 M^2 \ln M + 6M^2 \cdot k \cdot \ln n + 6 SM^2 + 6 SM^2 + 6 SM^2 \ln(N/S) + 3M^2$, then there exists
an $(S,n^k)$ - strongly balanced table.

\end{lemma}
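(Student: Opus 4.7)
The plan is to apply the probabilistic method: sample a random table by coloring each cell of $[N]\times[N]$ independently and uniformly from $[M]$, and show that under the stated hypothesis the probability that the sampled table violates either balance condition is strictly less than $1$. For item~(1), fix subsets $B_1,B_2$ of sizes $s_1,s_2\geq S$ and a color $a\in[M]$: the count of $a$-colored cells in $B_1\times B_2$ is a sum of $s_1 s_2$ independent Bernoulli$(1/M)$ variables, so by the multiplicative Chernoff bound the probability it exceeds $(2/M)s_1 s_2$ is at most $e^{-s_1 s_2/(3M)}\leq e^{-s_1 s_2/(3M^2)}$ (the weaker exponent is used to unify with item~(2) below). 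A union bound using $\binom{N}{s_i}\leq (eN/S)^{s_i}$ for the subsets, $M$ for the color, and $N^2$ for the sizes $(s_1,s_2)$ contributes, after multiplication by $3M^2$, the terms $3M^2\ln M$, $6SM^2$, and $6SM^2\ln(N/S)$ of the hypothesis.

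The main subtlety lies in item~(2). The case $i=j$ reduces to item~(1) on the translate $B_1+i$ (and is vacuous when $a\neq b$), so assume $i\neq j$. For each $(x,y)\in B_1\times B_2$ let $Z_{x,y}$ be the indicator of the joint event $T(x+i,y)=a$ and $T(x+j,y)=b$. Two distinct such indicators can be correlated only when they lie in the same column and their $x$-coordinates differ by $i-j$ or $j-i$; hence the in-column dependency graph on $B_1$ has maximum degree at most $2$ and decomposes into paths and cycles. Such a graph admits a proper $3$-coloring with each class of size at least $s_1/3-O(1)$ (balance across small components is obtained by rotating the three color labels so that no single label is systematically missing). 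Within each class $B_{1,c}$, the $Z_{x,y}$ for $(x,y)\in B_{1,c}\times B_2$ are mutually independent Bernoulli$(1/M^2)$ variables, so applying the Chernoff bound per class with threshold $(2/M^2)|B_{1,c}|\,s_2$ (whose sum over $c$ is $(2/M^2)s_1 s_2$) and a union bound over the three classes yields a per-tuple failure probability of at most $3 e^{-s_1 s_2/(9M^2)}$.

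The final union bound multiplies this per-tuple probability by the factors $(eN/S)^{s_1+s_2}$ for the subsets, $M^2$ for the pair $(a,b)$, and $n^{2k}$ for the shift pair $(i,j)$. Taking logarithms and combining with the item~(1) contribution accounts, in the worst case $s_1=s_2=S$, for the remaining terms $6M^2 k\ln n$, a second $6SM^2$ term, and the additive $3M^2$ slack (absorbing $\ln 3$ from the three-class union bound along with other $O(1)$ constants). The hypothesis thus implies that the total failure probability is strictly less than $1$, yielding an $(S,n^k)$-strongly balanced table. The one genuinely technical point is the balanced $3$-coloring of the dependency graph used in item~(2); the rest is routine bookkeeping, since a graph of maximum degree $2$ decomposes into paths and cycles that can be cyclically $3$-colored with imbalance at most $1$ per component.
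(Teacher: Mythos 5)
Your proposal is structurally the same as the paper's (probabilistic method, Chernoff, union bound), but you have noticed and repaired a genuine gap that the paper's own proof glosses over. The paper applies the Chernoff bound directly to the sum of indicators $Z_{x,y}=\mathbf{1}[T(x+i,y)=a \wedge T(x+j,y)=b]$ over $(x,y)\in B_1\times B_2$ as if they were independent, but for $i\neq j$ they are not: two indicators $Z_{x_1,y}$ and $Z_{x_2,y}$ in the same column share a cell whenever $x_1-x_2\equiv \pm(i-j)\pmod N$. Your observation that the in-column dependency graph has maximum degree $2$, hence decomposes into paths and cycles and admits an (equitable) proper $3$-coloring, and that one can then apply Chernoff independently within each color class, is a correct and necessary repair. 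This is the substantive difference between the two proofs, and your version is the more careful one.

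Two smaller remarks on the bookkeeping. First, the paper quietly reduces from arbitrary $|B_i|\geq S$ to $|B_i|=S$ exactly: if the bound holds for all $S\times S$ subrectangles, an averaging argument over random $S$-subsets of $B_1'$ and $B_2'$ gives the bound for the larger rectangle, with no extra union-bound factor. Your route of unioning over all sizes $s_1,s_2\geq S$ (introducing the $N^2$ factor, hence an additive $\approx 2n\ln 2$ in the exponent) is valid but costs more than the hypothesis as stated can comfortably absorb; the averaging reduction is cleaner and matches the paper's constants. Second, the $3$-coloring inevitably degrades the exponent from $S^2/(3M^2)$ to roughly $S^2/(9M^2)$, plus the $\ln 3$ from the per-class union bound, so the precise hypothesis of the lemma would need its constants enlarged by a factor of about $3$ to go through your way. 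That constant drift is benign for the way the lemma is ultimately used (the remark after the lemma shows $M=o(S^{1/2}/\sqrt n)$ suffices, which tolerates constant-factor changes), but you should state explicitly that your version proves the lemma under a slightly stronger numerical hypothesis. Finally, your reduction of the $i=j$ case to item (1) on the translate $B_1+i$ is correct and worth keeping, since the definition as written does not exclude $i=j$ and the $1/M^2$ expectation used in the paper is wrong in that degenerate case.
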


NOTE: The condition is satisfied if $M = o((1/\sqrt{n}) S^{1/2})$.
\medskip

\begin{proof}
We first fix $(a,b) \in [M]^2$, two sets $B_1$ and $B_2$ with $B_1 \subseteq [N]$, $|B_1| = S$, $B_2 \subseteq [N]$, $|B_2| = S$.
Note that for a fixed cell $(x, y) \in B_1 \times B_2$ and
fixed $j \in [n^k]$, $\prob[T(x,y) = a] = 1/M$ and  $\prob[T(x,y) = a \mbox{ and } T(x+j,y) = b] = 1/M^2$.

Therefore, by the Chernoff bounds,
\[
\prob\bigg[ \frac{\mbox{number of $a$-colored cells in $B_1 \times B_2$}}{S^2} > 2 \frac{1}{M}\bigg] \leq e^{-(1/3) (1/M) S^2},
\]
and, for fixed $j$,
\[
\prob \bigg[ \frac{\mbox{number of $(a,b)$-colored $j$-apart cells in $B_1 \times B_2$}}{S^2} > 2 \frac{1}{M^2}\bigg] \leq e^{-(1/3) (1/M^2) S^2}.
\]
There are $M$ possibilities for choosing $a$, and the number of possibilities for choosing the sets $B_1$ and $B_2$ is
${N \choose S}^2 \leq (eN/S)^{2S} = e^{2S + 2S \ln (N/S)}$. Therefore, the probability that the relation~(1) in
Definition~\ref{d:stronglybalanced} does not hold is bounded
by
\begin{equation}
\label{e:eq1}
e^{-(1/3) (1/M) S^2 + \ln M + 2S + 2S \ln (N/S)}.
\end{equation}
There are $M^2$ possibilities for choosing $(a,b)$, $n^{2k}$ possibilities for $(i,j)$ and the number of possibilities for choosing the sets $B_1$ and $B_2$ is
${N \choose S}^2 \leq (eN/S)^{2S} = e^{2S + 2S \ln (N/S)}$. Therefore, the probability that the relation~(2) in Definition~\ref{d:stronglybalanced} does not hold is bounded by
\begin{equation}
\label{e:eq2}
e^{-(1/3) (1/M^2) S^2 + 2\ln M + 2k\ln n + 2S + 2S \ln (N/S)}.
\end{equation}
If the parameters satisfy the requirement stated in the hypothesis, then the bound in Equation~(\ref{e:eq1}) is less than $e^{-1} < 1/2$ and
the bound in Equation~(\ref{e:eq2}) is less than $e^{-1}< 1/2$. Therefore the probability that both relation~(1) and relation~(2) hold is positive, and thus there exists a $(S,n^k)$- strongly balanced table.~\qed
\end{proof}

The above proof uses the probabilistic method which does not indicate an efficient way to construct such tables. In our application, we will build such tables by exhaustive search, an operation that can be done in EXPSPACE.

A weaker type of a balanced table can be constructed in polynomial-time using a recent result of  Rao~\cite{rao:c:randpm}. We first recall the
following definitions. Let $X$ and $Y$ be two probability distributions on $\zo^n$. The distributions $X$ and $Y$ are $\epsilon$-close if
for
every $A \subseteq \zo^n$, $| \prob( X \in A) - \prob(Y \in A) | < \epsilon$. The min-entropy of distribution $X$ is
$\max_{a \in \zo^n} ( \log ( 1/\prob(X=a)))$.

\begin{fact}
\label{t:condenser}
\cite{rao:c:randpm}
For every $\delta > 0$, $\epsilon > 0$, there exists a constant $c$ and a polynomial-time computable function
$Ext :\zo^n \times \zo^n \mapping \zo^m$, where $m = \Omega(\delta n)$, such that if $X$ and $Y$ are two independent
random variables taking values in $\zo^n$ and following
distributions over $\zo^n$ with min-entropy at least $\delta n$, then
$Ext(X,Y)$ is $\epsilon$-close to a distribution with min-entropy $m - (\delta \log 1/\epsilon)^c$.
\end{fact}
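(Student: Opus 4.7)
The plan is to follow the strategy of Rao~\cite{rao:c:randpm}, which rests on sum-product estimates in finite fields together with the somewhere-condenser paradigm of Barak--Impagliazzo--Wigderson. I would first identify $\zo^n$ with the field $\F_{2^n}$, so that both addition and multiplication are available on the source alphabet. The overall idea is to build a simple algebraic map whose output, fed two independent sources of min-entropy $\delta n$, is statistically close to a distribution of strictly larger min-entropy rate, and then to iterate this map until the rate reaches the claimed target.

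The first step is to construct a \emph{basic somewhere-condenser}, a polynomial-time map $g : \zo^n \times \zo^n \mapping (\zo^m)^t$ for a constant $t=t(\delta)$ such that for any two independent sources $X,Y$ of min-entropy $\delta n$, the output $g(X,Y)$ is close to a somewhere-random distribution: at least one of the $t$ coordinates has min-entropy rate $\delta(1+\alpha)$ for some $\alpha=\alpha(\delta)>0$. I would define $g$ by a short list of algebraic combinations such as $xy$, $(x+y)y$, $xy+y^2$, and analyse it via a Bourgain--Katz--Tao sum-product estimate, which guarantees that any $A\subseteq \F_{2^n}$ of density bounded away from $0$ and $1$ satisfies $\max(|A+A|,|A\cdot A|)\geq |A|^{1+\Omega(1)}$. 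A flatness reduction replaces the sources by uniform distributions on sets of size $2^{\delta n}$, and the sum-product bound then forces at least one coordinate of $g(X,Y)$ to spread out noticeably further than its inputs.

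The second step is iteration. Starting from the given $(\delta n)$-sources, apply the basic condenser $O(\log(1/\delta))$ times, feeding each somewhere-random coordinate back against a fresh input, so that the min-entropy rate of some coordinate is multiplied by $(1+\alpha)$ at each stage. After $O(\log(1/\delta))$ stages one coordinate is close to a source of min-entropy rate $1-o(1)$. The third step is to collapse the somewhere-random output into a single string, for instance via a merger or by using one of the original inputs as a seed to a strong seeded extractor applied in parallel to every coordinate and then XOR-ing the results. The additive entropy loss $(\delta \log(1/\epsilon))^c$ in the conclusion is obtained by bookkeeping the polynomial blow-up of the statistical error and the polylogarithmic min-entropy shaving at each of the $O(\log(1/\delta))$ stages.

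The hardest part will be the first step: establishing the quantitative sum-product estimate in $\F_{2^n}$ with constants independent of $n$, and translating it into a statistical statement about flat sources that survives the passage to general min-entropy sources. Once the basic somewhere-condenser is in hand, iteration and final merging are essentially standard pseudorandomness manipulations. In the rest of our paper, Fact~\ref{t:condenser} is invoked only as a black box in the proof of Theorem~3, so the sketch above is offered as an outline of the origin of the statement rather than as a self-contained derivation.
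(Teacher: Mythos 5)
Fact~\ref{t:condenser} is cited verbatim from Rao~\cite{rao:c:randpm}; the paper offers no proof of it, and none is needed, since it is used purely as a black box in the proof of Theorem~3. You correctly flag this in your last sentence, so there is no substantive gap to worry about for the paper itself.

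That said, your reconstruction of where the statement comes from has a few problems worth naming. First, the iteration you propose --- ``apply the basic condenser $O(\log(1/\delta))$ times, feeding each somewhere-random coordinate back against a fresh input'' --- is not available in the two-source setting: you only ever have $X$ and $Y$, and any re-use of them against a derived quantity breaks the independence that the sum-product estimate needs. The Barak--Impagliazzo--Wigderson iteration that you are recalling is a \emph{multi}-source construction precisely because it needs a new independent source at each condensing step; collapsing it to two sources is the whole difficulty that Rao's paper addresses, and his route is not the naive iteration you describe. Second, the sum-product estimate over $\F_{2^n}$ is more delicate than the Bourgain--Katz--Tao theorem over prime fields: $\F_{2^n}$ has a lattice of subfields (and for composite $n$ these are large), so $|A+A|$ and $|A\cdot A|$ can both stay small for sets $A$ concentrated on a subfield. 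Constructions in this line either work over $\F_p$ for prime $p$ or must explicitly quantify over subfield obstructions; stating ``constants independent of $n$'' glosses over this. Third, the specific loss $m - (\delta \log 1/\epsilon)^c$ in the conclusion is not obtained by generic bookkeeping over a bounded number of stages; it reflects the precise way Rao trades statistical error against entropy deficiency in his condenser, and one cannot recover that exponent from the outline given. Since the Fact is correctly attributed and invoked as a citation, none of this affects the paper, but if you want an accurate provenance you should read Rao's construction directly rather than back-filling it from the BIW/Bourgain template.
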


Rao's result easily implies the existence of a polynomial-time table with a useful balancing property.
\begin{lemma}
\label{l:balanceRao}
Let $\delta > 0$, $\epsilon > 0$ and let $c$ be the constant and $Ext : \zo^n \times \zo^n \mapping \zo^m$ be the function from 
Theorem~\ref{t:condenser}, corresponding to these parameters. We identify $\zo^n$ with $[N]$ and $\zo^m$ with $[M]$ and view $Ext$ as an $[N] \times [N]$ table colored with $M$ colors. Then for every rectangle $B_1 \times B_2 \subseteq [N] \times [N]$, where
$|B_1 | \geq 2^{\delta n}$ and $|B_2 | \geq 2^{\delta n}$ and for every $A \subseteq [M]$, the number of cells
in $B_1 \times B_2$ that are $A$-colored is at most
\[
\bigg(\frac{|A|}{M}2^{(\delta \log (1/\epsilon))^c} + \epsilon \bigg) \cdot |B_1 \times B_2|.
\]
\end{lemma}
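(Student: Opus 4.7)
The plan is to reduce the combinatorial counting statement to the probabilistic guarantee of Rao's condenser by feeding the condenser with the uniform distributions on the two sides of the rectangle. Concretely, given a rectangle $B_1 \times B_2$ with $|B_1|, |B_2| \geq 2^{\delta n}$, let $X$ be uniformly distributed on $B_1$ and $Y$ be uniformly distributed on $B_2$, with $X$ and $Y$ independent. Because $|B_i| \geq 2^{\delta n}$, each of $X$ and $Y$ has min-entropy at least $\delta n$, so the hypotheses of Fact~\ref{t:condenser} are satisfied.

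Next, I would invoke Fact~\ref{t:condenser} to obtain a random variable $Z$ on $\zo^m$ such that $Ext(X,Y)$ is $\epsilon$-close to $Z$ and $Z$ has min-entropy at least $m - (\delta \log(1/\epsilon))^c$. Min-entropy at least $k$ means that $\prob(Z=a) \leq 2^{-k}$ for every $a$, so for any $A \subseteq [M]$ a union bound gives
\[
\prob(Z \in A) \;\leq\; |A| \cdot 2^{-(m - (\delta \log(1/\epsilon))^c)} \;=\; \frac{|A|}{M} \cdot 2^{(\delta \log(1/\epsilon))^c}.
\]
Applying the definition of $\epsilon$-closeness to the set $A$ then yields
\[
\prob\bigl(Ext(X,Y) \in A\bigr) \;\leq\; \prob(Z \in A) + \epsilon \;\leq\; \frac{|A|}{M} \cdot 2^{(\delta \log(1/\epsilon))^c} + \epsilon.
\]

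Finally, since $(X,Y)$ is uniform on $B_1 \times B_2$, the probability on the left-hand side equals the fraction of $A$-colored cells in the rectangle. Multiplying by $|B_1 \times B_2|$ therefore gives the desired upper bound on the number of $A$-colored cells and completes the proof. There is no real obstacle here: the argument is a direct translation between the probabilistic language in which the condenser is stated and the counting language used in the lemma, and the only substantive ingredient is the min-entropy guarantee from Fact~\ref{t:condenser}. The only mild care point is making sure the additive error $\epsilon$ from statistical closeness is inherited correctly, which is immediate from applying $\epsilon$-closeness to the specific event $\{Ext(X,Y) \in A\}$.
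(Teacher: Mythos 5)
Your proof is correct and follows essentially the same approach as the paper's: both arguments set $X,Y$ to be independent uniform distributions on $B_1$ and $B_2$ (giving min-entropy $\geq \delta n$), invoke Rao's condenser to get a distribution $Z$ that is $\epsilon$-close to $Ext(X,Y)$ and has high min-entropy, bound $\prob(Z \in A)$ by a union bound, transfer to $\prob(Ext(X,Y) \in A)$ via $\epsilon$-closeness, and translate the probability back to a cell count.
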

\begin{proof}
Let $B_1$ and $B_2$ be two subsets of $\zo^n$ of size $\geq 2^{\delta n}$. Let $X$ and $Y$ be two independent random variables 
that follow the uniform distributions
on $B_1$, respectively $B_2$ and assume the value $0$ on $\zo^n - B_1$, respectively $0$ on $\zo^n - B_2$. Since $X$ and $Y$ have min-entropy
$\geq 2^{\delta n}$, it follows that $Ext(X,Y)$ is $\epsilon$-close to a distribution $Z$ on $\zo^m$ that has min-entropy
$m - (\delta \log 1/\epsilon)^c$. If $A \subseteq \zo^m$, then $Z$ assigns to $A$ probability mass at most
$\frac{|A|}{M} 2^{(\delta \log 1/\epsilon)^c}$, because it assigns to each element in $\zo^m$ at most
$2^{-(m - (\delta \log 1/\epsilon)^c)}$.
Thus, $Ext(X,Y)$ assigns to $A$ probability mass at most
$\frac{|A|}{M} 2^{(\delta \log 1/\epsilon)^c} + \epsilon$.
This means that the number of occurrences of $A$-colored cells in the $B_1 \times B_2$ rectangle is bounded by
$\bigg(\frac{|A|}{M} 2^{(\delta \log 1/\epsilon)^c} + \epsilon\bigg) \cdot |B_1 \times B_2|$.~\qed
\end{proof}

\section{Generating multiple random independent strings}
\label{s:multiplerandomindep}
We prove Theorem 1. The formal statement is as follows.

\begin{theorem}
For every $k \in \nat$, there is a polynomial-time computable function $f$ that on input 
$x_1, x_2$, two strings of length $n$,  outputs $n^k$ strings $x_3, x_4, \ldots, x_{n^k + 2}$, strings of length $n$,   with the following property. For every sufficiently large $n$ and for every function $\alpha(n)$, if $x_1$ and $x_2$ satisfy 

(i) $K(x_1) \geq n - \log n$, 

(ii) $K(x_2) \geq n - \log n$, and 

(iii) $x_1$ and $x_2$ are at most $\alpha(n)$-dependent, 

then 

(a) $K(x_i) \geq n - (\alpha(n) + (k + O(1))\log n$, for every $i \in \{3, \ldots, n^k +2\}$,
and

(b) the strings $x_1, x_2, \ldots, x_{n^k + 2}$ are pairwise at most $\alpha(n) + (3k+O(1))\log n$-dependent.
\end{theorem}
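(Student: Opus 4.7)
My plan is to use the classical pairwise-independent hash family over $GF(2^n)$. Fix an irreducible polynomial of degree $n$ over $GF(2)$ (constructible in polynomial time in $n$, or hardwired into $f$) and identify $\zo^n$ with the field $GF(2^n)$. Viewing $x_1, x_2$ as field elements, output
\[
x_{i+2} \;:=\; x_1 + i \cdot x_2, \qquad i = 1, 2, \ldots, n^k,
\]
where each index $i$ is embedded into $GF(2^n)$ via its $n$-bit binary representation. Since the natural numbers $1, \ldots, n^k$ are distinct and nonzero as elements of $GF(2^n)$, we have $i - j \neq 0$ whenever $i \neq j$, so all the algebraic manipulations below are valid. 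Field arithmetic is polynomial time, so $f$ is polynomial time.

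For part (a), I would observe that for every fixed $i$, the map $x_1 \mapsto x_{i+2} = x_1 + i \cdot x_2$ is a bijection on $GF(2^n)$ whose inverse $x_{i+2} \mapsto x_{i+2} - i \cdot x_2$ is also computable from $x_2$ and $i$. Since $i$ has a self-delimiting description of length $O(k \log n)$, this gives $K(x_{i+2} \mid x_2) = K(x_1 \mid x_2) \pm O(k \log n)$. Hypothesis (iii) yields $K(x_1 \mid x_2) \geq K(x_1) - \alpha(n)$, and (i) gives $K(x_1) \geq n - \log n$, so $K(x_{i+2} \mid x_2) \geq n - \alpha(n) - (k+O(1)) \log n$. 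Since $K(x_{i+2}) \geq K(x_{i+2} \mid x_2) - O(1)$ (conditioning can only help), this establishes the bound in~(a).

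For part~(b), the decisive observation is that for any $i \neq j$ in $[n^k]$, the pair $(x_{i+2}, x_{j+2})$ together with $(i,j)$ determines $(x_1, x_2)$ via
\[
x_2 \;=\; (x_{i+2} - x_{j+2}) \cdot (i-j)^{-1}, \qquad x_1 \;=\; x_{i+2} - i \cdot x_2,
\]
and the reverse direction is similarly computable. Hence $K(x_{i+2}, x_{j+2}) = K(x_1, x_2) \pm O(k \log n)$. Applying the symmetry of information (inequality~(\ref{e:symmetryinf2})) together with hypothesis~(iii) gives
\[
K(x_1, x_2) \;\geq\; K(x_1) + K(x_2 \mid x_1) - O(\log n) \;\geq\; 2n - \alpha(n) - O(\log n).
\]
A second application of symmetry of information, combined with $K(x_{j+2}) \leq n + O(1)$, yields $K(x_{i+2} \mid x_{j+2}) \geq n - \alpha(n) - O(k \log n)$, and since $K(x_{i+2}) \leq n + O(1)$ trivially, the pairwise dependency between $x_{i+2}$ and $x_{j+2}$ is at most $\alpha(n) + O(k \log n)$. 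The mixed cases, such as $x_1$ versus $x_{i+2}$, are handled identically using the bijection $(x_1, x_2) \leftrightarrow (x_1, x_{i+2})$ given $i$, which costs only $O(k \log n)$.

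The proof is mostly routine; the only real challenge is careful accounting of the logarithmic error terms. Each invocation of symmetry of information introduces $O(\log n)$, and each index $i, j \in [n^k]$ threaded through a bijection introduces $O(k \log n)$. In the worst case (both strings drawn from the output) we pay for two such indices and a handful of symmetry-of-information steps, which matches the claimed pairwise dependency bound $\alpha(n) + (3k + O(1)) \log n$.
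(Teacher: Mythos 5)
Your proof is correct and takes essentially the same approach as the paper: the same linear construction $x_1 + i\cdot x_2$ over $GF(2^n)$ and the same key invertibility observations. Your bookkeeping via joint Kolmogorov complexity and the chain rule is a clean variant of the paper's direct-description accounting and in fact yields a slightly tighter constant ($2k + O(1)$ rather than $3k + O(1)$) in the dependency bound for pairs of output strings, which of course still proves the stated theorem.
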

\begin{proof} Let $x_1, x_2 \in \zo^n$ be such that $K(x_1) \geq n-\log n$, $K(x_2) \geq n - \log n$.  Since
$x_1$ and $x_2$ are at most $\alpha(n)$-dependent,  $K(x_1 \mid x_2) \geq K(x_1) - \alpha(n) 
\geq n - (\alpha(n) + \log n)$. Similarly, $K(x_2 \mid x_1) \geq n - (\alpha(n) + \log n)$.

The function $f$ outputs
\[
\begin{array}{rl}
x_3 & = x_1 + 1 \cdot x_2, \\

x_4 & = x_1 + 2 \cdot x_2, \\

\vdots \\
 x_{n^k+2}& = x_1 + n^k \cdot x_2,
 \end{array}
\]
where the arithmetic is done in the finite field GF$[2^n]$ and $1, 2, \ldots, n^k$ denote the first (in some canonical ordering) $n^k$ non-zero elements of GF$[2^n]$.

Let $x_i$ be one of the ``new" strings, \ie, $i \in \{3, \ldots, n^k+2\}$. Let $t$ be defined by $K(x_i \mid x_1) = t$. Observe that given $x_1$, $i$ (that can be described with $k \log n$ bits) and $t + O(1)$ bits we can construct $x_2$; first we compute $x_i$ and then from $x_1$ and $x_i$, we derive $x_2$.

Therefore, $K(x_2 \mid x_1) \leq t + k\log n + 2(\log k + \log \log n) + O(1)$. Since $K(x_2 \mid x_1) \geq n - (\alpha(n) + \log n)$, it follows that $t \geq n - (\alpha(n) + (k+O(1))\log n)$ (taking into account that $k < n$; if $k \geq n$, the theorem holds trivially).  Therefore, $K(x_i \mid x_1) \geq n - (\alpha(n) + (k+O(1)))\log n$ (which implies (a)). 
We infer that 
\[
\begin{array}{ll}
K(x_i) - K(x_i \mid x_1) & \leq (n+O(1)) - (n - (\alpha(n) + (k+O(1))\log n)) \\
& = \alpha(n) + (k+O(1))\log n.
\end{array}
\]
By the Symmetry of Information Theorem, $K(x_1) - K(x_1 \mid x_i) \leq \alpha(n) + (k+O(1))\log n$, and thus $x_i$ and $x_1$ are at most
$\alpha(n) + (k+O(1))\log n$-dependent.

Similarly, $x_i$ and $x_2$ are at most
$\alpha(n) + (k+O(1))\log n$-dependent. Thus, (b) follows for pairs $(x_i, x_1)$ and $(x_i, x_2)$ with $i \geq 3$.

Let us next consider a pair of strings $(x_i, x_j)$ with $i \not = j$ and $i, j \in \{3, \ldots, n^k+2\}$. Let $t$ be defined by $K(x_i \mid x_j) = t$.
Note that given $x_j$, $i$ and $j$ and $t+ O(1)$ bits we can construct $x_1$: first we compute $x_i$ and then from $x_i$ and $x_j$, we deive $x_1$.  Therefore, 
\[
K(x_1 \mid x_j) \leq t + 2k\log n + 2(\log k + \log \log n) + O(1).
\]
Recall that 
\[
K(x_1) - K(x_1 \mid x_j) \leq \alpha(n) + (k+O(1))\log n.
\]
Then, 
\[
\begin{array}{ll}
t + 2k \log n + 2(\log k + \log \log n) + O(1) & \geq K(x_1 \mid x_j) \\
& \geq K(x_1) - (\alpha(n) + (k+O(1))\log n)\\
& \geq n - (\alpha(n) + (k+O(1))\log n).
\end{array}
\]
Thus, $K(x_j \mid x_i) = t \geq n - (\alpha(n) + (3k+O(1))\log n)$.
It follows that
\[
\begin{array}{ll}
K(x_j) - K(x_j \mid x_i) & \leq (n+O(1)) - (n - (\alpha(n) + (3k+O(1))\log n)) \\
& \leq \alpha(n) + (3k+O(1))\log n.
\end{array}
\]
Thus, $x_j$ and $x_i$ are at most $\alpha(n) + (3k+O(1))\log n$-dependent.~\qed

\end{proof}

We next prove Theorem 2. The formal statement is as follows.

\begin{theorem} 
For every $k \in \nat$, for every computable function $s(n)$ verifying $(6k + 15) \log n < s(n) \leq n$ for every $n$, 
there exists a computable function $f$ that, for every $n$, on input two strings $x_1$ and $x_2$ of length $n$, outputs $n^k$ strings
$x_3, x_4, \ldots, x_{n^k + 2}$ of length $m= s(n)/3 - (2k+5) \log n$ with the following property. For every sufficiently large $n$ and for every function $\alpha(n)$, if 

(i) $K(x_1) \geq s(n)$, 

(ii) $K(x_2) \geq s(n)$ and 

(iii) $x_1$ and $x_2$ are at most $\alpha(n)$ - dependent, 

then 

(a) $K(x_i ) \geq m - (\alpha(n) + O(\log n))$, for every $i \in \{3, \ldots, n^k+2\}$ and

(b) the strings in the set $\{x_1, x_2, \ldots, x_{n^k + 2} \}$ are pairwise at most $\alpha(n) + (2k+O(1))\log n$-dependent.
\end{theorem}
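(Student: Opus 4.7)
My plan is to construct the outputs as shifted evaluations of a strongly balanced table. I will set $M = 2^m$ and $S = 2^\sigma$ with $\sigma$ chosen in a window that simultaneously meets the existence condition of Lemma~\ref{l:balancedtable} (roughly $\sigma \geq 2m + O(\log n)$) and an upper bound (roughly $\sigma \leq s(n) - m - \alpha(n) - O(\log n)$) that will be demanded by the analysis below; the choice $m = s(n)/3 - (2k+5)\log n$ is designed precisely to keep this window non-empty. Lemma~\ref{l:balancedtable} then supplies an $(S, n^k)$-strongly balanced table $T : \zo^n \times \zo^n \to \zo^m$, which I will produce by exhaustive search---this is what makes $f$ computable but not polynomial-time. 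The outputs will be $x_{i+2} = T(x_1+i,\, x_2)$ for $i = 1, \ldots, n^k$, with addition in $\mathrm{GF}[2^n]$. Throughout the analysis let $k_j = K(x_j) \geq s(n)$ and $B_j = \{u \in \zo^n : K(u) \leq k_j\}$; since $x_j$ is recoverable from its index in the canonical enumeration of $B_j$, one has $2^{k_j - O(\log n)} \leq |B_j| \leq 2^{k_j + 1}$, and in particular $|B_j| \geq S$, so Definition~\ref{d:stronglybalanced} applies to the rectangle $B_1 \times B_2$.

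For part (a) I will argue by contradiction: assuming $K(x_{i+2}) < m'$, set $A = \{a \in \zo^m : K(a) < m'\}$, so $|A| < 2^{m'}$. Summing clause~(1) of Definition~\ref{d:stronglybalanced} over $a \in A$ bounds the c.e.\ set $\{(x,y) \in B_1 \times B_2 : T(x+i, y) \in A\}$ by $(2|A|/M)\,|B_1|\,|B_2|$. Encoding $(x_1, x_2)$ by its index in this set, together with the $O(\log n)$-bit descriptions of $(m', i, k_1, k_2, T)$, gives $K(x_1, x_2) \leq m' - m + k_1 + k_2 + O(\log n)$; combining with $K(x_1, x_2) \geq k_1 + k_2 - \alpha(n) - O(\log n)$ forces $m' \geq m - \alpha(n) - O(\log n)$. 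The output-output pairs in part (b) will follow from exactly the same template, using clause~(2) in place of (1): assuming $K(x_{i+2}, x_{j+2}) < m'$ for $i \neq j$, taking $A = \{(a, b) : K(a, b) < m'\}$ and summing clause~(2) over $A$ will yield $K(x_{i+2}, x_{j+2}) \geq 2m - \alpha(n) - O(\log n)$, after which $K(x_{i+2}),\, K(x_{j+2}) \leq m + O(1)$ and symmetry of information produce the required pairwise dependency.

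The hard part will be the input-output pairs $(x_1, x_{i+2})$ and $(x_2, x_{i+2})$: by symmetry of information it suffices to lower-bound $K(x_{i+2} \mid x_1)$ by $m - \alpha(n) - O(\log n)$, and the obstacle is that a naive relativization of the part-(a) argument must handle an $x$-dependent target set $A_x = \{a : K(a \mid x) < m''\}$. My plan here is to study $D = \{(x, y) \in B_1 \times B_2 : T(x+i, y) \in A_x\}$ through the decomposition $|D| = \sum_a |\{(x, y) \in (X_a \cap B_1) \times B_2 : T(x+i, y) = a\}|$, with $X_a = \{x \in B_1 : a \in A_x\}$: the colors $a$ with $|X_a| \geq S$ contribute at most $(2/M)\,|B_2| \sum_{x \in B_1} |A_x| \leq (2 \cdot 2^{m''}/M)\,|B_1|\,|B_2|$ via clause~(1), while the residual colors contribute at most $MS\,|B_2|$. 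The careful bookkeeping that drives the factor $1/3$ (rather than the $1/2$ of the single-output companion~\cite{zim:c:kolmlimindep}) is precisely what is needed to keep the residual term dominated by the main term when $m''$ is as small as $m - \alpha(n) - O(\log n)$; once $|D|$ is so bounded, indexing $(x_1, x_2)$ in the c.e.\ set $D$ yields $K(x_1, x_2) \leq m'' - m + k_1 + k_2 + O(\log n)$, and the standard lower bound closes the loop. The pair $(x_2, x_{i+2})$ will be handled symmetrically, and the $(2k+O(1))\log n$ overhead in the dependency bound will arise from the self-delimiting encodings of the indices $i, j \in [n^k]$.
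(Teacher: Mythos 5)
Your setup---a strongly balanced table found by exhaustive search, outputs $x_{i+2} = T(x_1+i, x_2)$---and your treatment of part (a) and of the output--output pairs in (b) all match the paper's proof of Theorem~2 (part (a) you handle a bit more directly, via the rectangle $B_1 \times B_2$ rather than via the stronger conditional bound the paper proves in its Claim~1, which is fine). The gap is in the input--output pairs of part (b).

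Your plan there is to bound $|D|$ for $D = \{(x,y) \in B_1 \times B_2 : T(x+i,y) \in A_x\}$ by splitting colors into those with $|X_a| \ge S$ and those with $|X_a| < S$, conceding a residual contribution of order $MS\,|B_2| = 2^{m+\sigma+t_2+O(1)}$ from the second class. To obtain the contradiction with $K(x_1 x_2) \ge t_1 + t_2 - \alpha(n) - O(\log n)$, this residual must itself be at most $2^{t_1 + t_2 - \alpha(n) - \Omega(\log n)}$, that is $m + \sigma \le t_1 - \alpha(n) - \Omega(\log n)$. Since $t_1 = K(x_1)$ can be as small as $s(n)$, and the existence condition of Lemma~\ref{l:balancedtable} already forces $\sigma \ge 2m + \Omega(\log n)$, this collapses to $3m \le s(n) - \alpha(n) - \Omega(\log n)$. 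But $m$ is fixed at $s(n)/3 - (2k+5)\log n$ by the theorem statement (and $f$ cannot depend on $\alpha$), so the window you describe is non-empty only when $\alpha(n) = O(\log n)$. The theorem is stated for arbitrary $\alpha(n)$, and for $\alpha(n) \gg \log n$ your residual term swamps the main term and the argument does not close. The remark about the $1/3$ factor is not enough: the $1/3$ buys you $m + \sigma \le s(n) - O(\log n)$, but you need the extra slack $\alpha(n)$, and there is none to give.

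The paper sidesteps the residual entirely with a ``bad column'' argument whose bound is independent of $\alpha(n)$. Using clause~(1) against rectangles $B \times B_2'$ with $|B_2'| = S$, it shows that for each color $a$ fewer than $S$ columns $u$ can have more than $(2/M)|B|$ occurrences of $a$ in the strip $B \times \{u\}$; hence fewer than $MS$ columns are bad for any color, and any bad column satisfies $K(u) \le m + \sigma + O(\log n) = s(n) - (2k+5)\log n + O(\log n) < s(n)$. Since $K(x_2) \ge s(n)$, $x_2$ is a good column, so the single strip $B \times \{x_2\}$ is already balanced, with no exceptional set to account for. Conditioning on $x_2$ and using $A_{x_2} = \{w : K(w \mid x_2) < m-t\}$ with $t = \alpha(n) + 7\log n$, the set $G$ of row indices in this strip with color in $A_{x_2}$ has size $< |A_{x_2}|(2/M)|B| < 2^{t_1 - t + 2}$, giving $K(x_1 \mid x_2) < t_1 - \alpha(n) - \log n$ and the contradiction. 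The structural point you are missing is that ``goodness'' of $x_2$ depends only on $K(x_2) \ge s(n)$, so $\alpha(n)$ enters only through $|A_{x_2}|$ and never through a residual term that has to be beaten.
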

\begin{proof} 
We fix $n$ and let $N = 2^n$, $m= s(n)/3 - (2k+5) \log n$, $M = 2^m$, $S =2^{2s(n)/3}$. We also take $t= \alpha(n) + 7 \log n$.
The requirements of Lemma~\ref{l:balancedtable} are satisfied and therefore there exists a table $T: [N] \times [N] \mapping [M]$ that is $(S,n^k)$- strongly balanced. By brute force, we find the smallest (in some canonical sense) such table $T$. Note that the table $T$ can be described with $\log n + O(1)$ bits. 

The function $f$ outputs 
\[
\begin{array}{rl}
x_{3} & = T(x_1+1, x_2),\\
 x_{4} &= T(x_1+2, x_2), \\
 \vdots \\
  x_{n^k + 2} & = T(x_1+n^k, x_2).

\end{array}
\]
We show the following two claims.

\begin{cclaim}
\label{c:claim1}
For every $j \in \{3, \ldots, n^k+2\}$,  $K(x_j \mid x_1) \geq K(x_j) - (\alpha(n) + O(\log n))$ and $K(x_j \mid x_2) \geq K(x_j) - (\alpha(n) + O(\log n))$.
\end{cclaim}
\begin{cclaim}
\label{c:claim2}
For every $i, j \in \{3, \ldots, n^k+2\}$,  $K(x_j \mid x_i) \geq K(x_j) - (\alpha(n) + (2k+O(1))) \log n$. 
\end{cclaim}

Claim~\ref{c:claim1} is using ideas from the paper~\cite{zim:c:kolmlimindep}.  For the sake of making this paper self-contained we present the proof.
Let $j \in \{3, \ldots, n^k + 2\}$. We show that  $K(x_j \mid x_1)$ and $K(x_j \mid x_2)$ are at least $m - \alpha(n) - 7 \log n$. We show this relation for $K(x_j \mid x_2)$ (the proof for $K(x_j \mid x_1)$ is similar). Suppose that $K(x_j \mid x_2) <  m - \alpha(n) - 7 \log n = m - t$. Let $t_1 = K(x_1)$. Note that $t_1 \geq s(n)$. Let $B = \{u \in \zo^n \mid K(u) \leq t_1\}$. Note that $2^{t_1 + 1} > |B| \geq 2^{2s(n)/3} = S$. ($B$ has size $\geq 2^{2s(n)/3}$ because it contains the set $0^{s(n)/3}\zo^{2s(n)/3}$.)We say that a column $u \in [N]$ is \emph{bad for color $a \in [M]$ and $B$} if
the number of occurrences of $a$ in the $B \times \{u\}$ subrectangle of $T$ is greater that $(2/M) \cdot |B|$ and we say that $u$ is \emph{bad for $B$} if it is bad for some color $a$ and $B$. For every $a \in [M]$, the number of $u$'s that are bad for $a$ and $B$ is $< S$ (since $T$ is $(S,n^k)$-strongly balanced and we can take into account the first balancing property of such tables). Therefore, the number of $u$'s that are bad for $B$ is $< M \cdot S$. Given $t_1$ and a description of the table $T$, one can enumerate the set of $u$'s that are bad for $B$. This implies that any $u$ that is bad for $B$ can be described by its rank in this enumeration and the information needed to perform the enumeration. Therefore, if $u$ is bad for $B$,
\[
\begin{array}{ll}
K(u) & \leq \log (M \cdot S) + 2 (\log t_1 + \log n)+O(1) \\
& \leq m + 2s(n)/3 + 4\log n +O(1) \\
& < s(n),
\end{array}
\]
provided $n$ is large enough.
Since $K(x_2) \geq s(n)$, it follows that $x_2$ is good for $B$.

Let $A = \{w \in [M] \mid K(w \mid x_2) < m- t\}$.  We have $|A| < 2^{m-t}$ and, by our assumption, $x_j \in A$. Let $G$ be the subset of $B$ of positions in the strip $B \times \{x_2\}$ of $T$ having a color from $A$ (formally, $G= {\rm proj}_1 (T^{-1}(A) \cap (B \times \{x_2\})$) . Note that $x_1$ is in $G$. Each color $a$ occurs in the strip $B \times \{x_2\}$ at most $(2/M) \cdot |B|$ (because $x_2$ is good for $B$). Therefore the size of $G$ is bounded
by 
\[
|A| \cdot(2/M) \cdot |B| < 2^{m-t} \cdot(2/M) \cdot 2^{t_1+1} \leq 2^{t_1 - t + 2}.
\]
Given $x_2, t_1, m-t$ and a description of the table $T$, one can enumerate the set $G$. Therefore, $x_1$ can be described by its rank in this enumeration and by the information needed to perform the enumeration. It follows that
\[
\begin{array}{ll}
K(x_1 \mid x_2) & \leq t_1 - t + 2 + 2(\log t_1 + \log(m-t) + \log n) + O(1) \\
& \leq t_1 - t + 6 \log n +O(1) \\
& = t_1 - \alpha(n) - \log n + O(1)\\
& = K(x_1) - \alpha(n) - \log n +O(1),
\end{array}
\]
which contradicts that $x_1$ and $x_2$ have dependency at most $\alpha(n)$.
\smallskip

We next prove Claim~\ref{c:claim2}.
\smallskip

We fix two elements $i \not= j$ in $\{3, \ldots, k+2\}$ and analyze $K(x_i | x_j)$. 

Let $t_1 = K(x_1)$ and $t_2 = K(x_2)$. From hypothesis, $t_1 \geq s(n)$ and $t_2 \geq s(n)$. 
We define $B_1 = \{u \in \zon \mid K(u) \leq t_1\}$ and $B_2 = \{u \in \zon \mid K(u) \leq t_2\}$. We have $S \leq |B_1| < 2^{t_1+1}$ and $S \leq |B_2| < 2^{t_2+1}$. ($B_1$ and $B_2$ have size larger than $S = 2^{2s(n)/3}$, because they contain the set
$0^{s(n)/3} \{0,1\}^{2s(n)/3}$.)

Let $T_{i,j}^{-1}(x_i, x_j)$ denote the set of pairs $(u,v) \in [N] \times [N]$ such that $T(u+i,v) = x_i$ and $T(u+j,v) = x_j$.

Note that $(x_1, x_2) \in T_{i,j}^{-1} (x_i, x_j) \cap (B_1 \times B_2)$. Since the table $T$ is strongly balanced,
\[
|T_{i,j}^{-1} (x_i, x_j) \cap (B_1 \times B_2)| \leq \frac{2}{2^{-2m}} 2^{t_1 + t_2+2} = 2^{t_1+t_2-2m+3}.
\]
Note that $T_{i,j}^{-1} (x_i, x_j) \cap (B_1 \times B_2)$ can be effectively enumerated given
$x_i$, $x_j$, $i$, $j$, and the table $T$. Thus $x_1x_2$ can be described from $x_i x_j$, the rank of $(x_1, x_2)$ in the above enumeration, $i$, $j$, and the table $T$.
This implies that 
\[
\begin{array}{ll}
K(x_1x_2) & \leq t_1 + t_2 - 2m + 3 +K(x_ix_j)+ 2k\log n + 2(\log k + \log \log n) + O(\log n) \\
& \leq t_1 + t_2 - 2m + K(x_ix_j) + (2k +O(1)) \log n.
\end{array}
\]
On the other hand, $K(x_1 x_2) \geq K(x_1) + K(x_2 \mid x_1) - O(\log n)$ and $K(x_2 \mid x_1) \geq K(x_2) - \alpha(n)$.
Therefore,
\[
\begin{array}{ll}
K(x_1x_2) & \geq K(x_1) + K(x_2) - (\alpha(n) + O(\log n)) \\    
& = t_1 + t_2 - (\alpha(n) + O(\log n)).

\end{array}
\]
Combining the last two inequalities, we get that
\[
t_1 + t_2 -(\alpha(n) + O(\log n)) \leq t_1 + t_2 - 2m + K(x_i x_j) + (2k +O(1)) \log n
\]
which implies that 
\[
K(x_i x_j) \geq 2m - \alpha(n) - (2k +O(1)) \log n.
\]
Therefore
\[
\begin{array}{ll}
K(x_j | x_i) & \geq K(x_i x_j) - K(x_i) - O(\log n) \\
& \geq (2m - \alpha(n) - (2k +O(1)) \log n) - (m+ O(1)) - O(\log n) \\
& = m - \alpha(n) - (2k +O(1)) \log n.
\end{array}
\]
It follows that
\[
\begin{array}{ll}
K(x_j) - K(x_j \mid x_i) & \leq (m+O(1)) - (m - \alpha(n) - (2k +O(1)) \log n) - O(\log n) \\
& \leq \alpha(n) + (2k+ O(1)) \log n.
\end{array}
\]
Thus, $x_j$ and $x_i$ are at most $\alpha(n) + (2k+ O(1)) \log n)$-dependent.~\qed

\end{proof}

\if01
NOTE: The stuff below is not needed.
\\

Let $\F$ be the field with $2^n$ elements. Let ${\rm ind}(n)$ be the largest $k$ such that there exist $(k+2)$ vectors in
$\F^k$ with the property that any $k$ of them are linearly independent (when $\F^k$ is viewed as a vector space over $\F$).

\begin{lemma}
Let $k$ be the largest integer such that $k^3 - k < 2^{n+1}$. Then ${\rm ind}(n) \geq k$.

\end{lemma}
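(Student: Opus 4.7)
The plan is a straightforward greedy construction that maintains the invariant ``every $k$-subset of the vectors chosen so far is linearly independent.'' Write $q=|\F|=2^n$. First I would pick $v_1,\dots,v_k$ to be any basis of $\F^k$, which trivially preserves the invariant.

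Next, for $r=k$ and then for $r=k+1$, I need to select $v_{r+1}\in\F^k$ so that every new $k$-subset of the form $\{v_{r+1}\}\cup T$ is linearly independent, where $T$ ranges over the $(k-1)$-subsets of $\{v_1,\dots,v_r\}$. By the invariant each such $T$ is itself linearly independent, so $\mathrm{span}(T)$ is a bona fide hyperplane of $\F^k$ containing exactly $q^{k-1}$ vectors. Hence at most $\binom{r}{k-1}\,q^{k-1}$ choices are forbidden for $v_{r+1}$, and a valid choice exists whenever
\[
q^k>\binom{r}{k-1}\,q^{k-1}, \quad\text{i.e.,}\quad q>\binom{r}{k-1}.
\]

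The two successive steps demand $q>\binom{k}{k-1}=k$ and $q>\binom{k+1}{k-1}=k(k+1)/2$; the second subsumes the first and is equivalent to $k(k+1)<2q=2^{n+1}$. To derive this from the hypothesis, factor $k^3-k=(k-1)k(k+1)$: for $k\ge 2$ one has $k(k+1)\le (k-1)k(k+1)=k^3-k<2^{n+1}$, exactly what the greedy step requires. The remaining cases $k\in\{0,1\}$ are trivial (for $k=1$ a single nonzero field element, repeated three times, does the job).

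The only (mild) point of care is the final step, converting the lemma's algebraic hypothesis $k^3-k<2^{n+1}$ into the sharper counting inequality $k(k+1)<2^{n+1}$ that the greedy bound actually uses; the factorization of $k^3-k$ makes this automatic for $k\ge 2$. One should note that the greedy bound is in fact noticeably stronger than the cubic form stated in the lemma, so the hypothesis $k^3-k<2^{n+1}$ is a convenient but loose sufficient condition rather than a tight one.
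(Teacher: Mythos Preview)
Your greedy argument is correct. It differs from the paper's proof, which is more explicit: the paper writes down a concrete $k\times(k+2)$ matrix
\[
A(x)=\bigl[\, (1,x,x^2,\dots,x^{k-1})^{T}\ \big|\ (1,1,\dots,1)^{T}\ \big|\ I_k\,\bigr]
\]
with a parameter $x\in\F$, observes that every $k\times k$ submatrix not involving the first column has determinant $\pm 1$, and that the $\binom{k+1}{2}$ submatrices that do involve the first column each have determinant a nonzero polynomial in $x$ of degree $\le k-1$. Summing, at most $(k-1)\binom{k+1}{2}=\tfrac{k^3-k}{2}$ values of $x$ are bad, and since $\tfrac{k^3-k}{2}<2^n=|\F|$ some good $\alpha$ exists.

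What each approach buys: the paper's construction is essentially explicit (a Vandermonde column plus an all-ones column plus the identity), so once a good $\alpha$ is found the $k+2$ vectors are given by a closed formula; the cubic bound $k^3-k<2^{n+1}$ arises naturally from the root count. Your greedy construction is less explicit but conceptually simpler, and as you note it actually proves the stronger sufficient condition $k(k+1)<2^{n+1}$, of which the stated hypothesis is a consequence via $k(k+1)\le(k-1)k(k+1)=k^3-k$ for $k\ge 2$. Both arguments are of comparable length; neither has a real advantage for the (unused) role this lemma would have played in the paper.
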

\begin{proof}
Fix $k$ such that $k^3 - k < 2^{n+1}$. We consider the following $k \times (k+2)$ matrix
	\[
	A(x) = \left(
		\begin{array}{cccccc} 
		1 & 1 & 1 & 0 & \ldots  & 0 \\
		x & 1 & 0 & 1 & \ldots & 0 \\
		x^2 & 1 & 0 & 0 & \ldots & 0 \\
		\vdots & & & & & \\
		x^{k-1} & 1 & 0 & 0  & \ldots & 1 \\
		\end{array} 
		\right),
		\]
\ie, the first column is $(1, x, x^2, \ldots , x^{k-1})$, the second column is $(1,1,\ldots, 1)$ and the rest of the matrix is the $k \times k$ unity matrix $I_k$. Here $x$ is a variable ranging over $\F$. We show that we can assign to $x$ a value $\alpha \in \F$ such that $A(\alpha)$ has rank $k$.

It is easy to see that any $k \times k$ submatrix of $A$ that does not include the first column has a non-zero determinant. Let us look at the
$k \times k$ submatrices that include the first column. There are ${k+1 \choose 2}$ such submatrices and the determinant of each one of them is a
polynomial of degree $(k-1)$. These polynomials have, taken together, at most $(k-1) \cdot  {k+1 \choose 2} = \frac{k^3 - k}{2}$ zeros in $\F$. Therefore, given that $\frac{k^3 - k}{2} < 2^n = |\F|$, there exists $\alpha \in \F$ that is not a root of any of these polynomials. 

It follows that $A(\alpha)$ has rank $k$.~\qed
\end{proof}
\medskip

{\bf Generating independent strings}
\medskip
\fi
\section{Polynomial-time generation of one random string} 
\label{s:polyone}
In this section we prove Theorem~3. The formal statement is as follows.
\if01
\begin{theorem}
************

WRONG, because $A_1 \subseteq A_2$ and not vice-versa and the proof breaks down.

**************
\\

For every $\delta > 0$, there exists a constant $c$ and a polynomial-time computable function
$f: \zo^n \times \zo^n \mapping \zo^m$, where $m = O(\delta n)$, with the following property. If $x$ and $y$ are two strings of length $n$ satisfying

(a) $K(x) \geq \delta n$

(b) $K(y) \geq \delta n$

(c) $x$ and $y$ are at most $\log n$ - dependent,

then
\[
\begin{array}{ll}
K(f(x,y) \mid x) & \geq m - O((\log n)^c) \mbox{ and } \\
K(f(x,y) \mid y) & \geq m - O((\log n)^c)

\end{array}
\]
\end{theorem}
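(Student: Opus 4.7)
The plan is to define $f(x,y) := Ext(x,y)$ where $Ext$ is Rao's two-source condenser from Fact~\ref{t:condenser}, instantiated with the given $\delta$ and with accuracy parameter $\epsilon = 1/n^{D}$ for a large enough constant $D$; the output length $m = \Omega(\delta n)$ and the exponent $c$ are inherited from Fact~\ref{t:condenser}. Since the theorem is symmetric in $x$ and $y$, it suffices to bound $K(f(x,y) \mid x) \geq m - O((\log n)^c)$, after which the bound on $K(f(x,y) \mid y)$ follows by the identical argument with the roles swapped.

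The strategy mirrors the proof of Claim~1 inside Theorem~2, with Lemma~\ref{l:balanceRao} replacing the strongly-balanced table. Set $t_1 = K(x)$ and $t_2 = K(y)$ (both $\geq \delta n$), and let $B_1 = \{u\in\zo^n : K(u)\leq t_1\}$ and $B_2 = \{v\in\zo^n : K(v)\leq t_2\}$; each has size in $[2^{\delta n}, 2^{t_i+1})$, so Lemma~\ref{l:balanceRao} applies to the rectangle $B_1\times B_2$. Assume for contradiction that $K(z\mid x) < m - t$ for $z = Ext(x,y)$ and some $t = \Theta((\log n)^c)$ to be fixed; then $z$ lies in the set $A = \{w\in[M] : K(w\mid x) < m-t\}$ of size at most $2^{m-t}$, and $A$ is computably enumerable from $(x,t)$.

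Next I declare a row $u\in B_1$ to be \emph{heavy} if $|\{v\in B_2 : Ext(u,v)\in A_u\}| > T$, where $A_u := \{w : K(w\mid u) < m-t\}$ and $T$ is a threshold to be chosen; heaviness is a property of $u$ alone (each $A_u$ is enumerable from $u$), and the set of heavy rows is computably enumerable given just $(t,t_1,t_2,n)$. To bound the number of heavy rows I sum $|\{v\in B_2 : Ext(u,v)\in A_u\}|$ over $u\in B_1$ by grouping contributions by the short program $p$ of length $< m-t$ that witnesses $w\in A_u$, then partitioning $B_1$ according to the value $a = U(p,u)$ and invoking Lemma~\ref{l:balanceRao} on each single-color sub-rectangle of the form $(B_1\cap U(p,\cdot)^{-1}(a))\times B_2$. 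Combined with a Markov step this forces the number of heavy rows to be below $2^{\delta n}$ once $T$ is chosen near $2^{t_2 - t/2}$; since heavy rows are enumerable uniformly in the listed parameters, a heavy $u$ would have $K(u) < \delta n$, so $K(x)\geq\delta n$ excludes $x$ from the heavy set. With $x$ non-heavy, the strip $\{x\}\times B_2$ contains at most $T$ cells colored in $A$, the column $y$ is one of them, and the enumeration (from $x$, $t$, $t_1$, $t_2$, $n$) yields $K(y\mid x)\leq \log T + O(\log n) < t_2 - \log n$, contradicting that $x$ and $y$ are at most $\log n$-dependent.

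The main obstacle is the additive $\epsilon$ term in Lemma~\ref{l:balanceRao}: Rao's condenser cannot tolerate $\epsilon$ smaller than roughly $2^{-m^{1/c}}$ without the output min-entropy collapsing, so $M\epsilon$ is not automatically small and the per-row and per-sub-rectangle error contributions must be controlled very carefully. Claim~1 of Theorem~2 avoided this because the strongly-balanced table had a clean multiplicative per-color balance, which the polynomial-time condenser does not give. The quantitative heart of the proof is therefore choosing $\epsilon = 1/n^D$ (ensuring $(\delta\log 1/\epsilon)^c = O((\log n)^c)$), restricting every Markov argument to the rectangle $B_1\times B_2$ rather than to $B_1\times\zo^n$, and verifying that the resulting trade-off still leaves $t = O((\log n)^c)$ sufficient to beat both the condenser error and the $\log n$-dependence gap.
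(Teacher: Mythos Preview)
Your plan attacks the statement by a genuinely different route than the paper's (abandoned) attempt: you make the ``bad'' set $A_u=\{w:K(w\mid u)<m-t\}$ depend on the row $u$ and try to show that few rows are heavy. This sidesteps the paper's specific error (the paper defined a row-independent $A_1=\{v:K(v)<m-t\}$ and a row-dependent $A_2=\{v:K(v\mid y)<m-t\}$ and needed $A_2\subseteq A_1$, but in fact $A_1\subseteq A_2$, which is exactly why the author flagged the theorem as ``WRONG'' and retreated to the unconditional bound $K(f(x,y))\ge m-(\alpha(n)+O((\log n)^c))$).

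However, your heaviness bound has a real gap at the step ``invoking Lemma~\ref{l:balanceRao} on each single-color sub-rectangle $(B_1\cap U(p,\cdot)^{-1}(a))\times B_2$.'' Lemma~\ref{l:balanceRao} requires \emph{both} sides of the rectangle to have size at least $2^{\delta n}$; the row-slices $B_1^{p,a}:=B_1\cap U(p,\cdot)^{-1}(a)$ obtained by partitioning according to the value of $U(p,u)$ can be arbitrarily small (even singletons), so the lemma does not apply to them. If instead you bound the $a$-colored cells in $B_1^{p,a}\times B_2$ by those in the full $B_1\times B_2$ and sum over $a$, you pick up a factor of $M$ on the additive $\epsilon$, giving $M\epsilon=2^{\Omega(\delta n)}/n^{D}$, which swamps everything; you correctly identify $M\epsilon$ as the obstacle but the proposed fix (``restricting every Markov argument to $B_1\times B_2$'') does not remove it.

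The deeper point is that what you need is essentially a \emph{strong} two-source condenser---one where $Ext(u,Y)$ is well-behaved for most individual rows $u$, not merely on average over a high-min-entropy $X$. Fact~\ref{t:condenser} (Rao's condenser) gives no such per-row guarantee, and Lemma~\ref{l:balanceRao} is only its rectangle reformulation; neither controls events of the form ``$Ext(u,v)=g(u)$'' where the target color varies with the row. This is precisely the obstruction that forced the paper to abandon the conditional statement and prove only $K(Ext(x,y))\ge m-(\alpha(n)+O((\log n)^c))$, where the bad set $A=\{v:K(v)<m-t\}$ is fixed and a single application of Lemma~\ref{l:balanceRao} to $B_1\times B_2$ suffices. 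Your outline, as written, does not close this gap.
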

\begin{proof}
Let $\epsilon = 4n^{12}$ and let $c$ be the constant and  $Ext : \zo^n \times \zo^n \mapping \zo^m$ be the function given by Lemma~\ref{l:balanceRao} for parameters $(\delta/2)$ and $\epsilon$. Let $t = 12 \log n + ((\delta/2) \log 1/\epsilon)^c + 2 = O((\log n)^c)$. 

The function $f$ on input $x$ and $y$ returns $z = Ext(x,y)$. 
We show that $K(z \mid x) \geq m-t$ and
$K(z \mid y) \geq m-t$.  Actually we prove the second inequality (the first one can be proven similarly).

Suppose $K(z \mid y) < m-t$.

Let $t_1 = K(x)$, $t_2 = K(y)$, $B_1 = \{u \in \zo^n \mid K(u) \leq t_1\}$, $B_2 = \{u \in \zo^n \mid K(u) \leq t_2\}$.
From hypothesis, $t_1 \geq \delta n$ and $t_2 \geq \delta n$.

Note also that $2^{\delta n/2} \leq |B_1| \leq 2^{t_1 +1}$ and $2^{\delta n/2} \leq |B_2| \leq 2^{t_2 +1}$.

Let
\[
A_1 = \{v \in \zo^m \mid K(v) < m-t\}
\]
and
\[
A_2 = \{v \in \zo^m \mid K(v \mid y) <  m-t\}
\]
We focus on the table defined by the function $Ext: [N] \times [N] \mapping [M]$, where, as usual, we have identified $\zo^n$ with $[N]$ and
$\zo^m$ with $[M]$.

We say that column $u \in \zo^n$ is {\em bad for $A$ and $B_1$} if the number of $A$-colored cells in the rectangle $B_1 \times \{y\}$ is
\[
\geq n^6 \cdot \bigg( \frac{|A|}{M} \cdot 2^{((\delta/2) \log (1/\epsilon))^c} + \epsilon \bigg) \cdot |B_1|.
\]
Taking into account the balancing property from Lemma~\ref{l:balanceRao}, it follows that the number of columns  that are bad for $A_1$ and $B_1$ is at most
\[
\frac{|B_2|}{n^6} \leq 2^{t_2 - 6\log n + 1}.
\]
The set of bad columns for $A_1$ and $B_1$ can be effectively enumerated given $t_1$, $\delta$ and the table (which can be described with $\log n + O(1)$ bits). Therefore any column $u$ that is
bad for $A_1$ and $B_1$ can be described by its rank in the above enumeration and by the information needed to perform the enumeration.
Consequently, for any $u$ that is bad for $A_1$ and $B_1$, we have
\[
\begin{array}{ll}
K(u) & \leq t_2 - 6 \log n + 2(\log t_1 + \log n) + O(1) \\
& \leq t_2 - \log n.
\end{array}
\]
Since $K(y) = t_2$, it follows that $y$ is good for $A_1$ and $B_1$.

Let $G$ be the subset of $B_1$ of positions in the strip $B_1 \times \{y\}$ that have $A_2$ colored cells.
Since $A_2 \subseteq A_1$, the $A_2$-colored cells form a subset of the $A_1$-colored cells.

Since $Ext(x,y) = z \in A_2$ and $x \in B_1$, the cell $(x,y)$ belongs to the rectangle $B_1 \times \{y\}$ and is $A_2$ colored. In other words, $x \in G$.

Taking into account that $y$ is good for $A_1$ and $B_1$, we can bound the size of $G$ by
\[
\begin{array}{rl}
n^6 \bigg( \frac{|A_1|}{M} & \cdot 2^{((\delta/2) \log 1/\epsilon)^c} +\epsilon \bigg) |B_1| \\
& \leq 2^{t_1 + 1 + 6 \log n} \bigg( \frac{2^{m-t}}{2^m} \cdot 2^{((\delta/2) \log 1/\epsilon)^c} + 2^{-\log 1/\epsilon} \bigg) \\
& = 2^{t_1 - t + 6 \log n + ((\delta/2) \log n)^c + 1} + 2^{t_1 - \log 1/\epsilon + 1 + 6 \log n} \\
& \leq 2^{t_1 - 6 \log n}.

\end{array}
\]
The set $G$ can be enumerated if we are given $y$, $t_1$, $\delta$ and $n$ (from which we can derive $t$ and the table $Ext$), and every element in $G$ can be described by its rank in the enumeration and by the information neeeded to perform the enumeration.

Since $x \in G$, it follows that
\[
\begin{array}{ll}
K(x \mid y) & \leq t_1 - 6 \log n + 2 (\log t_1 + \log n) + O(1) \\
& < t_1 - \log n,
\end{array}
\]
which contradicts that $K(x \mid y) \geq K(x) - \log n$. Thus $K(z \mid y) \geq m - t = m - O((\log n)^c)$.~\qed
\end{proof}
\fi
\begin{theorem}
For every $\delta > 0$ and for every function $\alpha(n)$, there exists a constant $c$ and a polynomial-time computable function
$f: \zo^n \times \zo^n \mapping \zo^m$, where $m = \Omega(\delta n)$, with the following property. If $n$ is sufficiently large and  $x$ and $y$ are two strings of length $n$ satisfying

(i) $K(x) \geq \delta n$

(ii) $K(y) \geq \delta n$

(iii) $x$ and $y$ are at most $\alpha(n)$ - dependent,

then
\[
K(f(x,y)) \geq m - (\alpha(n) + O((\log n)^c)). 
\]
\end{theorem}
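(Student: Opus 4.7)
The plan is to set $f(x,y) := Ext(x,y)$, where $Ext : \zo^n \times \zo^n \mapping \zo^m$ is the two-source condenser from Fact~\ref{t:condenser} instantiated with the given $\delta$ and with an error parameter $\epsilon$ chosen below as an inverse polynomial in $n$ (depending on $\alpha(n)$). Then $Ext$ viewed as a coloring of $[N]\times [N]$ is a balanced table in the sense of Lemma~\ref{l:balanceRao}: every rectangle $B_1 \times B_2$ with $|B_1|,|B_2|\geq 2^{\delta n}$ contains at most $(|A|/M)\cdot 2^{(\delta \log 1/\epsilon)^c}+\epsilon$ fraction of $A$-colored cells for any $A\subseteq\zo^m$. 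The output length is $m=\Omega(\delta n)$, and $f$ is polynomial-time because $Ext$ is. The key task is then to prove $K(z) \geq m - (\alpha(n)+O((\log n)^c))$ for $z=f(x,y)$.

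I would argue by contradiction, mimicking Claim~\ref{c:claim1} from the proof of Theorem~2 but using the Rao-based balanced table instead of the strongly-balanced one. Assume $K(z)<m-t$ with $t = \alpha(n)+C(\log n)^c$ for a large constant $C$ to be fixed. Let $A=\{v\in\zo^m : K(v)<m-t\}$, so $|A|<2^{m-t}$. Let $t_1=K(x)$, $t_2=K(y)$, and $B_i=\{u\in\zon : K(u)\leq t_i\}$; since $t_i\geq \delta n$ we have $2^{\delta n}\leq |B_i|\leq 2^{t_i+1}$. Declare a column $v\in\zo^n$ \emph{bad for $A$ and $B_1$} if the strip $B_1\times\{v\}$ contains more than $n^6 \bigl((|A|/M)\cdot 2^{(\delta \log 1/\epsilon)^c}+\epsilon\bigr)\cdot|B_1|$ many $A$-colored cells. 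By Lemma~\ref{l:balanceRao} applied to the rectangle $B_1\times B_2$, at most $|B_2|/n^6$ columns of $B_2$ are bad. Since the set of bad columns is effectively enumerable from $t_1,t_2,n,\delta,\epsilon$ (which requires only $O(\log n)$ bits), each bad column has unconditional complexity at most $\log(|B_2|/n^6) + O(\log n) < t_2$, so $y$ cannot be bad.

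With $y$ good, the set $G = \{u\in B_1 : Ext(u,y)\in A\}$ has size at most $n^6\bigl((|A|/M)\cdot 2^{(\delta \log 1/\epsilon)^c}+\epsilon\bigr)\cdot|B_1|$ and contains $x$ (since $z=Ext(x,y)\in A$ and $x\in B_1$). Because $G$ is enumerable from $y,t_1,\delta,\epsilon,n$, this yields
\[
K(x\mid y)\;\leq\;\log|G|+O(\log n)\;\leq\; t_1 - t + (\delta \log 1/\epsilon)^c + \log\bigl(1+\epsilon\cdot M\cdot 2^{t-(\delta \log 1/\epsilon)^c}\bigr) + O(\log n).
\]
Choosing $\epsilon$ so that $\epsilon \cdot n^6 \cdot 2^{t_1+1} \ll 2^{t_1-\alpha(n)}$ (it suffices to take $\epsilon\leq 2^{-\alpha(n)}/n^{10}$) kills the additive error term, and then fixing $t \geq (\delta \log 1/\epsilon)^c+\alpha(n)+O(\log n)$ makes the right-hand side strictly less than $t_1-\alpha(n)$, contradicting $K(x\mid y)\geq K(x)-\alpha(n) = t_1-\alpha(n)$ from the dependency hypothesis.

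The main obstacle is calibrating $\epsilon$ against $\alpha(n)$: shrinking $\epsilon$ controls the additive error from the condenser, but it also inflates the multiplicative term $2^{(\delta \log 1/\epsilon)^c}$, which reappears inside $t$. For the intended regime $\alpha(n)=O(\log n)$ one can take $\epsilon = 1/n^K$ for a large constant $K$, giving $t = O((\log n)^c)$ directly; for larger $\alpha$, one takes $\epsilon = 2^{-\alpha(n)}/n^{O(1)}$, and the final constant $c$ in the statement absorbs the resulting polynomial dependence of $t$ on $\log(1/\epsilon)=\alpha(n)+O(\log n)$. A secondary subtlety is ensuring that $Ext$ remains polynomial-time for this choice of $\epsilon$, which follows from Fact~\ref{t:condenser} as long as $\log(1/\epsilon)$ is polynomial in $n$.
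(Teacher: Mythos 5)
Your proposal uses the right ingredients (Rao's two-source condenser viewed as a balanced table, argument by contradiction on the set $A$ of low-complexity outputs), but it takes a different route through the counting than the paper. You port the ``bad column'' averaging argument from Claim~\ref{c:claim1} of Theorem~2: declare a column $v$ bad if the strip $B_1 \times \{v\}$ has an $n^6$ factor more $A$-colored cells than the rectangle average, rule out $y$ being bad, then bound $K(x\mid y)$ directly and contradict the dependency hypothesis. The paper's Theorem~3 proof is simpler and avoids the bad-column machinery entirely: Lemma~\ref{l:balanceRao} already bounds the number of $A$-colored cells in the whole rectangle $B_1 \times B_2$, so one enumerates $G = \{(u,v)\in B_1\times B_2 : Ext(u,v)\in A\}$ to get $K(xy) \leq t_1 + t_2 - (\alpha(n) + 10\log n) + O(\log n)$, and the contradiction comes from symmetry of information together with the dependency bound, $K(xy) \geq K(y) + K(x\mid y) - 2\log n - O(1) \geq t_1 + t_2 - \alpha(n) - O(\log n)$. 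Your strip argument works but costs an extra Markov factor of $n^6$ and an extra enumeration step, without buying anything in exchange; the direct rectangle argument is the cleaner of the two.

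Two technical points you should fix. First, you instantiate Fact~\ref{t:condenser} with parameter $\delta$ and claim $|B_i|\geq 2^{\delta n}$; this is not quite right, since the set $\{u\in\zo^n : K(u)\leq t_i\}$ can only be guaranteed size $\geq 2^{\delta n - O(\log n)}$. The paper instantiates the condenser with $\delta/2$ and uses the cruder lower bound $|B_i| \geq 2^{\delta n/2}$, which is definitely safe; do the same. Second, you correctly identify the calibration tension in $\epsilon$, but the proposed remedy (``the final constant $c$ in the statement absorbs the resulting polynomial dependence'') does not salvage it: with $\epsilon = 2^{-\alpha(n)}/n^{O(1)}$ one gets $\log(1/\epsilon) = \alpha(n) + O(\log n)$ and hence a term $\big((\delta/2)\log(1/\epsilon)\big)^c = \Theta\big(\alpha(n)^c\big)$ inside $t$, which is not of the form $\alpha(n) + O((\log n)^c)$ when $\alpha(n) = \omega(\log n)$. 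The paper instead takes $\epsilon = 1/(8n^{10}\alpha(n))$, keeping $\log(1/\epsilon) = O(\log n)$ since $\alpha(n)\leq n + O(\log n)$, so that $t = \alpha(n) + O((\log n)^c)$ as stated; this is the choice to emulate.
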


\begin{proof}
Let $\epsilon = 1/(8n^{10}\cdot \alpha(n))$ and let $c$ be the constant and  $Ext : \zo^n \times \zo^n \mapping \zo^m$ be the function given by Theorem~\ref{t:condenser} for parameters $(\delta/2)$ and $\epsilon$. Let $t = \alpha(n) + 10\log n + ((\delta/2) \log 1/\epsilon)^c + 3 = \alpha(n) + O((\log n)^c)$. 

The function $f$ on input $x$ and $y$ returns $z = Ext(x,y)$. 
We show that $K(z )  \geq m-t$.  

Suppose $K(z ) < m-t$.

Let $t_1 = K(x)$, $t_2 = K(y)$, $B_1 = \{u \in \zo^n \mid K(u) \leq t_1\}$, $B_2 = \{u \in \zo^n \mid K(u) \leq t_2\}$.
From hypothesis, $t_1 \geq \delta n$ and $t_2 \geq \delta n$.

Note also that $2^{\delta n/2} \leq |B_1| \leq 2^{t_1 +1}$ and $2^{\delta n/2} \leq |B_2| \leq 2^{t_2 +1}$. (The sets $B_1$ and $B_2$ have size $\geq 2^{\delta n/2}$ because they contain $0^{n-\delta n/2} \{0,1\}^{\delta n/2}$.)

Let
\[
A = \{v \in \zo^m \mid K(v) < m-t\}
\]
We focus on the table defined by the function $Ext: [N] \times [N] \mapping [M]$, where, as usual, we have identified $\zo^n$ with $[N]$ and
$\zo^m$ with $[M]$.

Let $G$ be the subset of $B_1 \times B_2$ of cells in the rectangle $B_1 \times B_2$ that are $A$-colored.

Since $Ext(x,y) = z \in A$, $x \in B_1$ and $y \in B_2$, the cell $(x,y)$ belongs to the rectangle $B_1 \times B_2$ and is $A$-colored. In other words, $x \in G$.

Taking into account Lemma~\ref{l:balanceRao}, we can bound the size of $G$ by
\[
\begin{array}{rl}
\bigg( \frac{|A|}{M} \hspace{-0.1cm} & \cdot 2^{((\delta/2) \log 1/\epsilon)^c} +\epsilon \bigg) |B_1 \times B_2| \\
& \leq 2^{t_1 + t_2 + 2} \bigg( \frac{2^{m-t}}{2^m} \cdot 2^{((\delta/2) \log 1/\epsilon)^c} + 2^{-\log 1/\epsilon} \bigg) \\
& = 2^{t_1 + t_2 - t + ((\delta/2) \log 1/\epsilon)^c + 2} + 2^{t_1 + t_2 - \log 1/\epsilon + 2} \\
& \leq 2^{t_1 + t_2 - (\alpha(n) + 10 \log n)}.

\end{array}
\]
The last inequality follows from the choice of $\epsilon$ and $t$.

The set $G$ can be enumerated if we are given $t_1$, $t_2$, $\delta$ and $n$ (from which we can derive $t$ and the table $Ext$), and every element in $G$ can be described by its rank in the enumeration and by the information neeeded to perform the enumeration.

Since $x \in G$, it follows that
\[
\begin{array}{ll}
K(x y) & \leq t_1 +t_2 - \alpha(n) - 10 \log n + 2 (\log t_1 + \log t_2 + \log n) + O(1) \\
& < t_1 + t_2 - \alpha(n) - 4 \log n \\
& = K(x) + K(y) -\alpha(n) - 4 \log n.
\end{array}
\]
We have used the fact that $t_1 \leq n + O(1)$ and $t_2 \leq n + O(2)$. By the Symmetry of Information Theorem,
\[
K(xy) \geq K(y) + K(x \mid y) - 2\log n - O(1).
\]
Combining the last two inequalities, we get
\[
K(x) - K(x \mid y) > \alpha(n) + \log n,
\]
which contradicts the fact that $x$ and $y$ are at most $\alpha(n)$-dependent.~\qed
\end{proof}

\newcommand{\etalchar}[1]{$^{#1}$}


\end{document}